\newtheorem{thm}{Theorem}
\def\imagetop#1{\vtop{\null\hbox{#1}}}
\def\bea{\begin{eqnarray}}
\def\eea{\end{eqnarray}}
\def\ben{\begin{equation}}
\def\een{\end{equation}}
\def\benu{\begin{enumerate}}
\def\enu{\end{enumerate}}
\def\lsim {\ifmmode {\buildrel<\over\sim}}
\def\sss{\scriptscriptstyle\rm}
\def\1var{(\bx_1...\bx\N)}
\def\br{{\bf r}}
\def\bR{{\bf R}}
\def\b1{{\bf 1}}
\def\bx{{x}}
\def\Hxc{_{\sss HXC}}
\def\N{_{\sss N}}
\def\H{_{\sss H}}
\def\sph_int{ {\int d^3 r}}
\def\infintd3r{ \int_{-\infty}^\infty d^3r\,}
\def\intd3r{ \int d^3r\,}
\def\laplace1d{\frac{d^2}{dx^2}}
\def\plaplace1d{\frac{d^2}{d{x'}^2}}
\def\padr2{\frac{\partial^2}{\partial r^2}}
\def\N{{\cal N}}
\def\a{{\alpha}}
\def\b{{\beta}}
\newcommand{\mr}[1]{\mathrm{#1}}
\newcommand{\mc}[1]{\mathcal{#1}}
\newcommand{\mb}[1]{\mathbf{#1}}
\begin{document}


\title{Fragment-based Time-dependent Density-functional Theory}


\author{Mart\'in A. Mosquera}
\affiliation{Department of Chemistry, Purdue University, West Lafayette, IN 47907, USA}
\author{Daniel Jensen}
\affiliation{Department of Physics, Purdue University, West Lafayette, IN 47907, USA}
\author{Adam Wasserman}
\email{Corresponding author.\\awasser@purdue.edu}
\affiliation{Department of Chemistry, Purdue University, West Lafayette, IN 47907, USA}
\affiliation{Department of Physics, Purdue University, West Lafayette, IN 47907, USA}
\thanks{Corresponding author}


\date{\today}

\begin{abstract}
Using the Runge-Gross theorem that establishes the foundation of Time-dependent Density Functional Theory (TDDFT) we prove that for a given electronic Hamiltonian, choice of initial state, and choice of fragmentation, there is a {\em unique} single-particle potential (dubbed {\em time-dependent partition potential}) which, when added to each of the pre-selected fragment potentials, forces the fragment densities to evolve in such a way that their sum equals the {\em exact} molecular density at all times.  This uniqueness theorem suggests new ways of computing time-dependent properties of electronic systems via {\em fragment}-TDDFT calculations. We derive a formally exact relationship between the partition potential and the total density, and illustrate our approach on a simple model system for binary fragmentation in a laser field.

\end{abstract}

\pacs{}
\maketitle

Time-dependent density functional theory (TDDFT) \cite{RG84,Ullrich-book} allows one to predict, in principle, the 
evolution of the non-relativistic density $n(\br t)$ of a system of
interacting electrons subject to a time-dependent external potential $v(\br t)$.
Given an initial wave function, the time-dependent electron density determines
the external potential up to a time-dependent constant (Runge-Gross theorem [RG] \cite{RG84}) and the density may be found by solving
the time-dependent Kohn-Sham (TDKS) equations. These equations make it possible to perform
practical calculations to propagate the electronic density and its related quantities such as linear response functions. Due to its wide range
of applications, TDDFT is expected to continue being a workhorse in the coming years for chemistry, physics, and materials engineering \cite{BWG05}.

Although the computational cost of TDKS calculations is low compared to that of other
many-body techniques, new ideas are needed to enable the study of
larger systems with improved efficiency and accuracy.
For the ground-state problem, `divide-and-conquer' fragmentation techniques have been developed \cite{Y91} and applied successfully through the use of readily-available parallel computers. Related strategies have also been developed recently for the time-dependent problem within TDDFT \cite{PJ12}. For example, 
\citet{CW04} introduced a methodology to perform time-dependent
calculations within frozen-density embedding theory. It has been shown that this
method yields better results than ``supermolecular'' techniques in some cases \cite{FLWP11,NLBW05}. 
Other extensions include linear-response TDDFT for molecules in solvents \cite{IMTC04} and
TDDFT for interacting chromophores \cite{HFG01}.
Also, time-dependent calculations within subsystem-DFT have been reported 
(see \cite{N10} and references therein).
In subsystem-DFT, the density of the system is split into densities of localized subsystems. Then the Kohn-Sham kinetic energy of the total 
system is truncated,
and the energy functional is approximated by a functional of the subsystem densities;
its minimization
leads to Kohn-Sham equations for each subsystem. Neugebauer formulated this theory within 
linear response in the frequency domain and showed that it yields results consistent
with conventional TDDFT \cite{N07}. For the case of  
dissipative dynamics, Zhou et al. \citep{ZTRG10} showed how the RG theorem
can be applied and Kohn-Sham equations developed for {\em open} systems given an initial state, memory kernel, and system-bath correlation.

Among density-based {\rm ground-state} fragmentation techniques, Partition Density Functional Theory (PDFT) \cite{EBCW10} is a reformulation of Density Functional Theory that allows one to find the solution to the KS equations without solving the total molecular problem directly. The idea is to partition the
external potential into an arbitrary number of fragment potentials. The total energy of the isolated systems is minimized 
under
the constraint that the fragment densities sum to the correct molecular density. The Lagrange multiplier associated with the constraint (i.e. the {\em partition potential}) can be found by inversion if the total density is known \cite{CW07}, or via the self-consistent procedure of Ref. \cite{EBCW10} if it is not.
Every fragment is subject to the same partition potential. In contrast with quantum mechanical embedding theories (except for the latest version of quantum embedding \cite{HPC11}) and
with subsystem-DFT,
this potential is global and unique \cite{CW06}. The set of fragment densities obtained for a given choice of external-potential
partitioning is also unique. As \citet{P13} recently suggested, this uniqueness feature of PDFT makes it a suitable
candidate to simplify the formulation of subsystem-DFT. This letter reports on foundational work for such developments. We extend PDFT to the time-dependent regime and show how
the time-dependent external field can be partitioned.
A new potential termed the {\em time-dependent partition potential} is introduced in the formalism
in order to represent the exact time-dependent electronic density. 

To extend PDFT to the time-dependent domain we recall that there is no minimum principle from which 
the TDKS equations can be derived \cite{R96,V08}. In view of this, we follow a deductive approach
to define our TDKS equations. 
Our goal is to provide a fragment-based solution to the Liouville equation (we use atomic units throughout)
\begin{equation}
i\frac{\partial\hat{\Gamma}\left(t\right)}{\partial t}=\left[\hat{H}_{v}(t),\hat{\Gamma}(t)\right].
\label{e:Liouville}
\end{equation}
If $\hat{\Gamma}$ is a pure density matrix then Eq. (\ref{e:Liouville}) is equivalent to the
time-dependent Schr\"{o}dinger equation. We suppose that the initial state $\hat{\Gamma}(t_0)$
is given. In standard DFT notation, the Hamiltonian is given by
$
\hat{H}_v(t)=\hat{T}+\hat{V}_{ee}+\int \text{d}^3\mb{r}~ v\left(\mb{r}t\right)\hat{n}\left(\mb{r}\right)
$.
It is convenient to express the external potential $v(\br t)$ as the sum of the potential $\tilde{v}(\br)$ due to the $M$ nuclei $\tilde{v}(\br)=-\sum_\a^M Z_\a/|\br-\bR_\a|$, which is not explicitly time-dependent, and an additional potential $v_E(\br t)$ containing all of the explicit time-dependence due to external fields:
\ben
v(\br t)=\tilde{v}(\br)+v_E(\br t)
\een 
Our task is to divide the quantum system into $N_f$ fragments of
interacting electrons. This is done by assigning an external potential $v_{\alpha}(\br t)$, Hamiltonian
$\hat{H}_{\alpha}(t)$, and initial state $\hat{\Gamma}_{\alpha}(t_0)$ to each fragment.
Out of the infinitely many ways to choose the fragment potentials,
there are at least two cases that are physically relevant: (i) Direct partitioning of the time-dependent external potential in analogy to ground-state PDFT:
$v(\mb{r}t)=\sum_{\alpha}^{N_f} v_{\alpha}(\mb{r}t)$.
For example, if $N_f=M$, there are cases of interest where we could define
$v_{\alpha}(\mb{r}t)=-Z_{\alpha}/\left|\mb{r}-\mb{R}_{\alpha}(t)\right|$.
In such cases, the electronic density of fragment $\alpha$ would be an output variable of the 
dynamics of nucleus $\alpha$. In general, however, we find option (ii) to be more convenient: Fragment the {\em static} potential only,
\begin{equation}
\tilde{v}(\br)=\sum_{\alpha}\tilde{v}_{\alpha}(\br)~~,
\end{equation}
and define the time-dependent fragment potential $v_\alpha(\br t)$ by adding the total time-dependent potential $v_E(\br t)$ to each of the $\tilde{v}_\alpha(\br t)$'s:
\begin{equation}
v_{\alpha}(\mb{r}t)=\tilde{v}_{\alpha}(\mb{r})+v_{E}(\mb{r}t)~.
\end{equation}
Now define the many-electron fragment-$\alpha$ Hamiltonian as
\begin{equation}
\hat{H}_{\alpha}(t)=\hat{T}+\hat{V}_{ee}+\int \text{d}^3\mb{r}\, 
\left[v_{\alpha}(\mb{r}t)+v_p(\mb{r}t)\right]\hat{n}(\mb{r})~.
\label{e:fragment_H}
\end{equation}
The evolution of the state of this particular
fragment is governed by the Liouville equation
\begin{equation}
i\frac{\partial}{\partial t}\,\hat{\Gamma}_{\alpha}(t)=\left[\hat{H}_{\alpha}(t),\hat{\Gamma}_{\alpha}(t)\right]~.
\end{equation}
The time-dependent electronic density of fragment $\alpha$ is given by
$
n_{\alpha}(\br t)=\mr{Tr}\{\hat{\Gamma}_{\alpha}(t)\hat{n}(\mb{r})\}
$
and the {\em time-dependent partition potential} $v_p(\br t)$ of Eq. (\ref{e:fragment_H}) is defined by requiring that the sum of fragment densities reproduce the total molecular density at all times:
\begin{equation}
\sum_{\alpha=1}^{N_f}n_{\alpha}(\mb{r}t)=n(\mb{r}t)~~.
\label{e:density_constrain}
\end{equation}

Just like traditional TDDFT is based on a one-to-one mapping between the Kohn-Sham
potential $v_s(\mb{r}t)$ and the electronic density $n(\br t)$, we now prove an analogous one-to-one mapping between $n(\br t)$ and $v_p(\br t)$. The latter is therefore sharply defined by  Eqs. (\ref{e:Liouville})-(\ref{e:density_constrain}). 
 
\begin{thm}
  For a given set of initial states $\{\hat{\Gamma}_{\alpha}(t_0)\}$, the map between 
  the density and the partition potential is invertible up to 
  a time-dependent constant in the potential.
\end{thm}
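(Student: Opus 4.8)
The plan is to transplant the Runge--Gross argument into the fragment setting, exploiting the fact that in Eq.~(\ref{e:fragment_H}) the \emph{same} potential $v_p(\br t)$ is added to every fragment Hamiltonian. Suppose two partition potentials, $v_p(\br t)$ and $v_p'(\br t)$, together with the common fragment potentials $\{v_\alpha\}$ and the common initial states $\{\hat{\Gamma}_\alpha(t_0)\}$, produce fragment densities whose sums agree at all times, $\sum_\alpha n_\alpha(\br t)=\sum_\alpha n_\alpha'(\br t)=n(\br t)$. Assume, as in RG, that $w(\br t):=v_p(\br t)-v_p'(\br t)$ is analytic in $t$ about $t_0$, and suppose for contradiction that $w$ is not merely a function of $t$; let $k\ge 0$ be the lowest order for which $\partial_t^k w(\br t)\big|_{t_0}$ is not spatially constant. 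The goal is to derive a contradiction with the equality of total densities.

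First I would write down, for each fragment, the continuity equation $\partial_t n_\alpha=-\nabla\cdot\bj_\alpha$ and the quantum force-balance (Ehrenfest) equation $\partial_t\bj_\alpha(\br t)=-n_\alpha(\br t)\nabla\!\left[v_\alpha(\br t)+v_p(\br t)\right]+\bF_\alpha(\br t)$, where $\bF_\alpha$ (divergence of the momentum-stress tensor plus the interaction term) is a functional of $\hat{\Gamma}_\alpha(t)$ alone. Summing over $\alpha$, the sum $\bj:=\sum_\alpha\bj_\alpha$ obeys $\partial_t n=-\nabla\cdot\bj$, and the key observation is the collapse $\sum_\alpha n_\alpha\nabla v_p=n\,\nabla v_p$, which holds precisely because $v_p$ is global. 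Subtracting the primed equations from the unprimed ones and evaluating at $t_0$ — where all fragment states, densities, currents, and internal forces coincide, so the $v_\alpha$ and $\bF_\alpha$ contributions cancel — then differentiating $k$ times in time and using that $w,\dot w,\dots,\partial_t^{k-1}w$ are spatially constant at $t_0$ and therefore drop out of the dynamics through that order (the inductive step that parallels RG), I obtain the compact identity $\partial_t^{k+2}\,[\,n(\br t)-n'(\br t)\,]\big|_{t_0}=\nabla\!\cdot\!\big\{\,n(\br t_0)\,\nabla\,\partial_t^k w(\br t_0)\,\big\}$.

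Next I would close the argument with the standard integration-by-parts step: if the right-hand side vanished identically, then at $t_0$ one would have $\int d^3r\;\partial_t^k w\,\nabla\!\cdot\!\big(n\nabla\partial_t^k w\big)=0$, and integrating by parts (the surface term vanishing for densities that decay at large $|\br|$) yields $\int d^3r\; n(\br t_0)\,\big|\nabla\partial_t^k w(\br t_0)\big|^2=0$; since $n\ge 0$, this forces $\nabla\partial_t^k w|_{t_0}=0$ wherever $n>0$, i.e.\ $\partial_t^k w|_{t_0}$ is spatially constant, contradicting the choice of $k$. Hence $\partial_t^{k+2}[n-n']|_{t_0}\neq 0$, so $n$ and $n'$ differ in every neighborhood of $t_0$, contradicting the hypothesis. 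Therefore $w$ is a function of $t$ only, which is exactly invertibility of the $n\leftrightarrow v_p$ map up to a time-dependent constant.

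I expect the main obstacle to be the same one that burdens the original Runge--Gross proof rather than anything special to fragmentation: justifying that a purely time-dependent shift of $v_p$ — and, order by order, the spatially constant lower Taylor coefficients of $w$ — leaves the fragment dynamics untouched. This follows because $\int d^3r\,c(t)\hat{n}(\br)=c(t)\hat N$ commutes with each $\hat{H}_\alpha(t)$ and with each particle-number-conserving $\hat{\Gamma}_\alpha$, so such terms cancel out of every Liouville equation; but making the accompanying analyticity and $v$-representability hypotheses precise, and controlling the boundary terms in the integration by parts, are the delicate points. Everything else is bookkeeping that mirrors Runge--Gross, the one genuinely new ingredient being the collapse $\sum_\alpha n_\alpha\nabla v_p=n\,\nabla v_p$ made possible by the global, fragment-independent character of the partition potential.
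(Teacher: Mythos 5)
Your proposal is correct and follows essentially the same route as the paper's proof: both transplant the Runge--Gross induction to the fragments, derive $\partial_t^{k+2}[n-n']\big|_{t_0}=\pm\nabla\cdot\left[n(\mb{r}t_0)\nabla w_k\right]$ by summing the per-fragment continuity and current equations (using that $v_p$ is common to all fragments), and close with the same Green's-identity argument. Your explicit remarks on the collapse $\sum_\alpha n_\alpha\nabla v_p=n\nabla v_p$ and on why spatially constant Taylor coefficients drop out are just the details the paper leaves implicit when it invokes the Runge--Gross theorem and its ensemble generalization.
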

\begin{proof}
The proof uses the Runge-Gross theorem \cite{RG84}, and is analogous to it. Suppose there is a minimum integer $k\ge 0$ such that 
\begin{equation}
\frac{\partial^m}{\partial t^m}[v_p'(\mb{r}t)-v_p(\mb{r}t)]\Big|_{t=t_0}
\begin{cases}
=\mr{Constant}\quad  m<k\\
\neq \mr{Constant}\quad m=k~.
\end{cases}
\end{equation} 
Also assume that $v_p$ and $v_p'$ have the associated densities $\{n_{\alpha}\}$ and
$\{n_{\alpha}'\}$ correspondingly. 
Suppose $\hat{H}_{\alpha}(t)$ and $\hat{H}_{\alpha}'(t)$ are the Hamiltonians of fragment $\alpha$
that correspond to $v_p$ and $v_p'$, respectively. The key for the proof is the continuity equation
\begin{equation}
\frac{\partial n_{\alpha}(\mb{r}t)}{\partial t}=-\nabla \cdot \mb{j}_{\alpha}(\mb{r}t)
\end{equation}
and the Liouville equation for the fragment current densities
\begin{equation}
i\frac{\partial \mb{j}_{\alpha}(\mb{r}t)}{\partial t}=\mr{Tr}\left\{\hat{\Gamma}_{\alpha}(t)
\left[\hat{\mb{j}}(\mb{r}),\hat{H}_{\alpha}(t)\right]\right\}~.
\end{equation}
Define 
\begin{equation}
w_k^p(\mb{r})=\frac{\partial^{k}}{\partial t^{k}}\left[
v_p'(\mb{r}t)-v_p(\mb{r}t)\right]\Big|_{t=t_0}~.
\end{equation}
In virtue of the Runge-Gross theorem \cite{RG84} and its generalization to 
ensembles \cite{LT85}, it is easy to show that
\begin{equation}
\frac{\partial^{k+2}}{\partial t^{k+2}}\left[n'_{\alpha}(\mb{r}t)-n_{\alpha}(\mb{r}t)\right]|_{t=t_0}
=-\nabla \cdot \left[ n_{\alpha}(\mb{r}t_0)\nabla w_k^p(\mb{r}) \right]~.
\end{equation}
Summing over all fragments gives
\begin{equation}\label{intkey}
\frac{\partial^{k+2}}{\partial t^{k+2}}[n'(\mb{r}t)-n(\mb{r}t)]|_{t=t_0}
=-\nabla \cdot \left[ n(\mb{r}t_0)\nabla w_k^p(\mb{r}) \right]~.
\end{equation}
Now we show that the right-hand side of this equation cannot be zero.
Assume $\nabla \cdot [ n(\mb{r}t_0)\nabla w_k^p(\mb{r}) ]=0$ and $\nabla w_k^p\neq 0$. Now invoke Green's identity to find
\begin{equation}
\begin{split}
\int \text{d}^3\mb{r}\,w_k^p(\mb{r})&\nabla\cdot(n(\mb{r}t_0)\nabla w_k^p(\mb{r}))=
-\int \text{d}^3\mb{r} \,n(\mb{r}t_0)(\nabla w_k^p(\mb{r}))^2\\&
+\frac{1}{2}\oint \text{d}\mb{S}\cdot n(\mb{r}t_0)\nabla (w_k^p)^2(\mb{r})=0~.
\end{split}
\end{equation}
If the total electronic density falls off enough to make the surface term
negligible then $\nabla w_{k}^p(\mb{r})=0$, which is a contradiction. Therefore
the right-hand side of Eq. (\ref{intkey}) cannot be zero.
This leads to the conclusion that if $v_p'$ and $v_p$ differ
by more than a time-dependent constant then they cannot yield 
the same density in time.  
\end{proof}
The above theorem shows that if $\{\Gamma_{\alpha}(t_0)\}$ and $v_p(\br t)$ are given, 
then one obtains a {\em unique} set of fragment densities
$\{n_{\alpha}(\br t)\}$ and total density $n(\br t)$. The fragment density $n_{\alpha}(\br t)$ can be assumed to 
be non-interacting $v$-representable in time. Then we can associate a time-dependent Kohn-Sham
potential $v_{s,\alpha}(\br t)$ and initial state $\hat{\Gamma}_{s,\alpha}(t_0)$ to
describe the evolution of $n_{\alpha}(\br t)$ by means of the KS equations:
\begin{equation}
i\partial_t\varphi_{i\alpha}(\mb{r}t)=\left[-\frac{1}{2}\nabla^2+v_{s,\alpha}(\mb{r}t)\right]\varphi_{i\alpha}(\mb{r}t)~,
\end{equation}
where 
\begin{equation}
n(\mb{r}t)=\sum_{\alpha}n_{\alpha}(\mb{r}t)=\sum_{i\alpha}f_{i\alpha}\left|\varphi_{i\alpha}(\mb{r}t)\right|^2~.
\end{equation}

The fragments are allowed to have non-integer average numbers of electrons that are set by the initial state \cite{CW06}.
Since the Hamiltonian is particle-conserving,
the occupation numbers $f_{i\alpha}$ remain fixed during the propagation. 

In analogy with PDFT, we define the xc potential by means of
\begin{widetext}
\begin{equation}\label{vs}
v_{{\rm xc},\alpha}[n_{\alpha},\hat{\Gamma}_{\alpha}(t_0),\hat{\Gamma}_{s,\alpha}(t_0)](\mb{r}t)=
v_{s,\alpha}[n_{\alpha},\hat{\Gamma}_{s,\alpha}(t_0)](\mb{r}t)-
v_{\alpha}[n_{\alpha},\hat{\Gamma}_{\alpha}(t_0)](\mb{r}t)-
  v\H[n_{\alpha}](\mb{r}t)-v_p[n,\{\hat{\Gamma}_{\alpha}(t_0)\}](\mb{r}t)~.
\end{equation}
\end{widetext}
By comparing the fragment continuity equations for the interacting and non-interacting (Kohn-Sham) systems, we find that the above
definition of the xc potential is consistent with (for example, see \cite{GM12})
\begin{equation}
\begin{split}
\nabla\cdot\Bigg\{n_{\alpha}(\mb{r}t)&\nabla\Bigg[\int \text{d}^3\mb{r}\,\frac{n_{\alpha}(\mb{r}'t)}{|\mb{r}'-\mb{r}|}+\\&v_{{\rm xc},\alpha}
(\mb{r}t)\Bigg]\Bigg\}=\nabla\cdot[\mb{Q}_{\alpha}(\mb{r}t)-\mb{Q}_{s,\alpha}(\mb{r}t)]~,
\label{e:fragment_Q}
\end{split}
\end{equation}
where the right-hand sides are hydrodynamical terms given by 
$
\mb{Q}_{s,\alpha}(\mb{r}t)=-i\mr{Tr}\{ \Gamma_{s,\alpha}(t)[\,\hat{\mb{j}}(\mb{r}),\hat{T}]\}$ and
$\mb{Q}_{\alpha}(\mb{r}t)= -i\mr{Tr}\{ \Gamma_{\alpha}(t)[\,\hat{\mb{j}}(\mb{r}),\hat{T}+\hat{V}_{ee}]\}$.
This indicates that the conventional xc potential of TDDFT and family of approximations
can be used for the fragments' TDKS equations, a direct consequence of van Leeuwen's 
theorem \cite{L99}.

Furthermore, from the continuity equations for the total current and fragment current densities, and from Eqs. (\ref{e:fragment_Q}), we find a formally exact relationship between the time-dependent partition potential and the total density:
\begin{equation}
\begin{split}
\nabla \cdot(n(\mb{r}t)\nabla v_p(\mb{r}t))=
\frac{\partial^2 n(\mb{r}t)}{\partial t^2}+
\sum_{\alpha}\Big\{\nabla\cdot \mb{Q}_{s\alpha}[v_p](\mb{r}t)\\
-\nabla \cdot(n_{\alpha}[v_p](\mb{r}t)\nabla \bar{v}_{s,\alpha}[v_p](\mb{r}t))\Big\}
\label{e:inver}
\end{split}
\end{equation}
where $\bar{v}_{s,\alpha}[v_p]=v_{\Hxc,\alpha}[v_p]+v_{\alpha}$. In principle,
evaluation of Eq. (\ref{e:inver}) at $t=t_0$ yields a Sturm-Liouville linear differential equation where
$v_p(\mb{r},t=t_0)$ is the unknown variable. If we assume that the 
density is Taylor-expandable at $t=t_0$, then it is easy to show that
consecutive differentiation of Eq. (\ref{e:inver}) and evaluation
at $t=t_0$ leads to a family of equations from which the Taylor 
coefficients of $v_p(\br t)$ can be constructed in increasing order.
This suggests that a given density is $v_p$-representable as long
as the conditions of the Sturm-Liouville theory are met.

To illustrate our fragmentation approach, consider the simplest non-trivial model system consisting of a one-dimensional ``electron", two fragments, and an oscillating electric field of fixed frequency. For the static part of the external potential we choose a sum of soft-Coulomb potentials of equal strength $V_0$, a distance $l$ apart:
\begin{equation}
\tilde{v}(x)=V_0 \left(\frac{1}{\sqrt{(x+l/2)^2+a}}+\frac{1}{\sqrt{(x-l/2)^2+a}}\right).
\end{equation}
For the laser field we choose $v_{E}(x,t)=xE\sin(\omega t)$, with $E=0.1$, and $\omega=0.3$.

We partition the system by defining
$v_1(x,t)=V_0/\sqrt{(x+l/2)^2+a}+v_E(x,t)$ and $v_2(x,t)=V_0/\sqrt{(x-l/2)^2+a}+v_E(x,t)$.
The time-dependent fragment equations are, (for $\alpha=1,2$),
\begin{equation}
i\frac{\partial}{\partial t}\varphi_{\alpha}(xt)=
\left(-\frac{1}{2}\frac{\text{d}^2}{\text{d}x^2}+v_{\alpha}(xt)+v_p(xt)\right)\varphi_{\alpha}(xt)
\label{e:ptdks}
\end{equation}

\begin{figure}[h]
\centering
\begin{tabular*}{1.0\textwidth}{cc}
\imagetop{\includegraphics[width=0.24\textwidth]{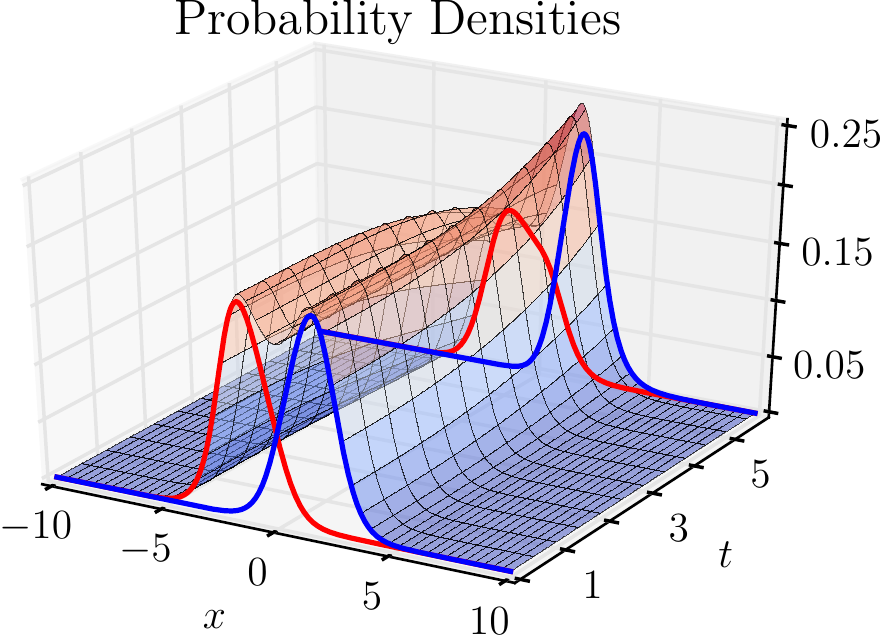}} & 
\imagetop{\includegraphics[width=0.24\textwidth]{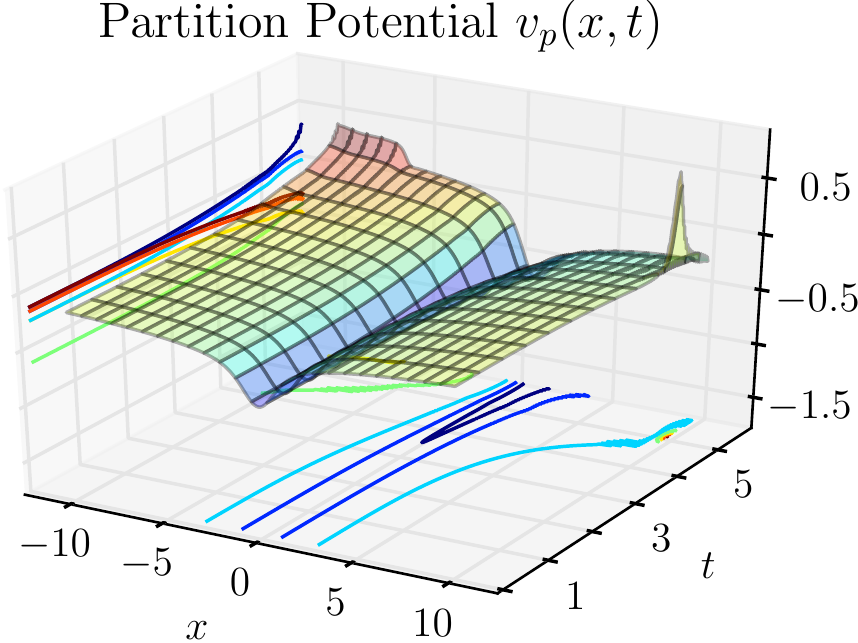}}
\end{tabular*}
\caption{{\em Left:} Fragment densities at times $t=0$ and $t\approx T/4$ along with the
total molecular density. {\em Right:} Exact partition potential $v_{p}\left(x,t\right)$. 
Parameters: $l = 1$, $a=1$, $V_{0}=-1$, $E=0.1$, and $\omega=0.3$. A box of length
20 with 401 points was used.}
\label{f:fig1}
\end{figure}

The initial states of the fragments are obtained by solving the ground-state 
PDFT equations as prescribed in Ref. \cite{MW12}. This procedure generates
the initial fragment Kohn-Sham orbitals needed to solve Eqs. \ref{e:ptdks}.
The distance between the wells was chosen
to allow for a significant overlap between the initial fragments' densities. 



Even though the principle to construct $v_p(\br t)$ is simple, note that
Eq. (\ref{e:inver}) can also be written as $v_p=\mc{F}v_p$, where the operator $\mc{F}$
computes the right-hand-side of the equation, solves the differential equation,
and finally outputs $v_p(\br t)$. One could employ this formula recursively, i.e.
$v_p^{k+1}=\mc{F}v_p^{k}$. We observed in our example that 
the term $\mb{Q}_{s\alpha}$ becomes noisy even after short times if the 
simulation box is discretized with large spatial steps. This noise 
is received by the partition potential during the propagation, and 
then it is received again by $\mb{Q}_{s,\alpha}$. This feedback process
turns the algorithm unstable. The problem is reminiscent to what occurs in traditional TDDFT when
one wants to find the exchange-correlation potential corresponding to a given 
density, even for only two electrons \cite{LK05}. 
To solve this problem in TDDFT, Ref. \cite{NRL12}
recently suggested an algorithm to control the feedback. They obtained 
encouraging results for a periodic system, but the methodology 
has not been tested for non-periodic systems.

Instead, we found the {\em exact} time-dependent partition potential by using the following optimization procedure: The density
and current density of the total system are found at each time step
using the Crank-Nicolson propagator. (Other propagation methods may
also be used.) A guess is made for the partition potential at the
next unknown time and the fragment wave functions are propagated forward
in time using this guess. (For small time steps the value of the partition
potential at the previous time step works well.) The fragment densities $\left(\left\{ n_{\alpha}\right\} ,\,\left\{ j_{\alpha}\right\} \right)$
are found using these fragment wave functions and added together to form
an approximation to the total densities $\left(n,\, j\right)$.
The errors $n_{\text{err}}=n-n_{\text{exact}}$ and $j_{\text{err}}=j-j_{\text{exact}}$
are computed and the residual $\text{norm}\left(n_{\text{err}}/n_{\text{exact}}\right)+\text{norm}\left(j_{\text{err}}/j_{\text{exact}}\right)$ is
used in the L-BFGS-B optimizer \cite{morales_remark_2011},
with the $L^{2}$ norm. The division by $n_{\text{exact}}$
and $j_{\text{exact}}$ weights the error in the asymptotic
regions to help increase the convergence rate, similar to the weighting used in \cite{peirs_algorithm_2003}.

   
The right panel of Fig. (\ref{f:fig1}) displays the resulting partition potential. The left panel shows 
the total density, along the corresponding fragment densities at the initial time and at 1/4$^{\rm th}$ of a period. 
The importance of memory effects \cite{MB01} is evident from Fig. (\ref{f:fig2}), where the dashed-dotted lines 
labeled ``Adiabatic" show the fragment densities obtained by solving the {\em ground-state} PDFT 
equations for the instantaneous $v(\br t)$ at 1/4$^{\rm th}$ of a period. Clearly, the correct 
partition potential is needed. Only when the electric field strength is reduced by a factor of 10$^3$ (keeping all other parameters fixed) does the adiabatic partition potential produce a molecular density that is visibly indistinguishable from the exact molecular density at time $t\approx T/4$.
Interestingly, the approximation $v_p(\br t)\approx v_p(\br t_0)$ 
(labeled ``Static" in Fig. (\ref{f:fig2})) works qualitatively well for short times, certainly much better than 
the adiabatic approximation. The inset on the right panel of Fig. (\ref{f:fig2}) shows how the static-$v_p$ 
approximation reproduces the correct dipole for short times. 

\begin{figure}[h]
\centering
\begin{tabular*}{1.0\columnwidth}[b]{cc}
\imagetop{\includegraphics[width=0.48\columnwidth]{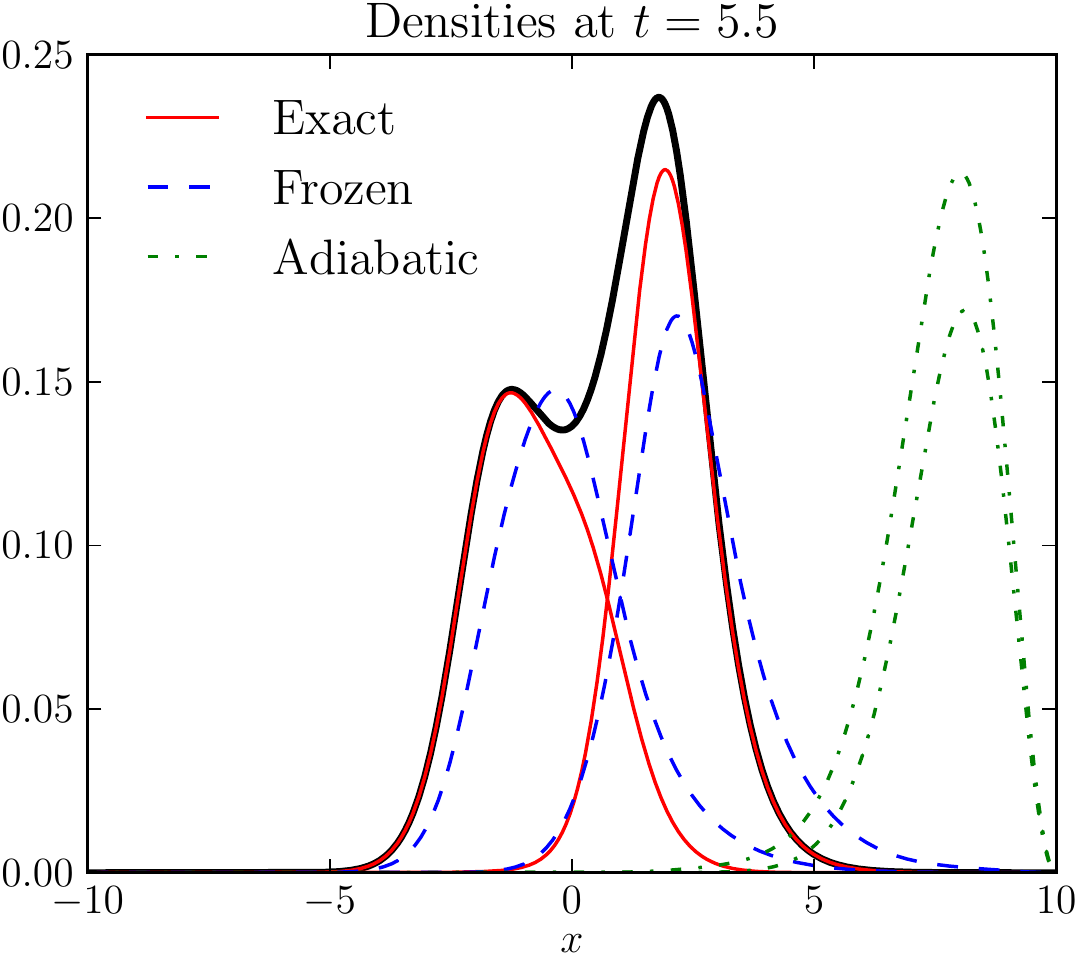}} & 
\imagetop{\includegraphics[width=0.48\columnwidth]{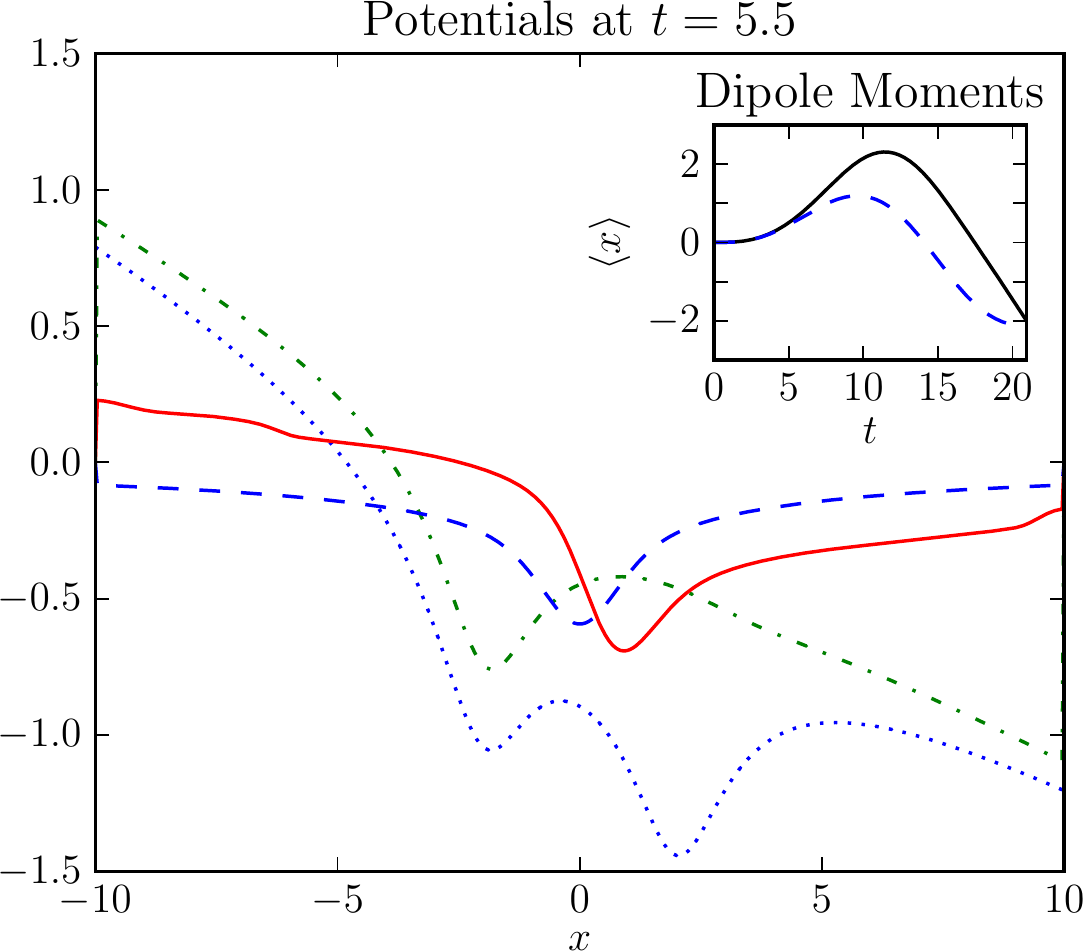}}  
\end{tabular*}
\caption{{\em Left:} Fragment densities at times $t\approx T/4$ along with the
total molecular density (thick solid) calculated: with the {\em exact} $v_p(x t)$ (solid),  static $v_p(x t)=v_p(x t_0)$ (dashed), and adiabatic $v_p(x t)=v_p^{\rm PDFT}[n(x t)]$ (dash-dotted). {\em Right:} Corresponding partition potentials at $t\approx T/4$. Parameters: $l = 1$, $a=1$, $V_{0}=-1$, $E=0.1$, and $\omega=0.3$. The inset compares the dipole moment obtained from the exact (solid) and static approximation to $v_p(x t)$ (dashed). The dotted line indicates the total 
(instantaneous) external potential.}
\label{f:fig2}
\end{figure}



In practice, successful application of our approach to large systems will ultimately rely on the quality of approximations to the time-dependent partition potential. The static approximation might be useful for short times. Furthermore, for problems whose physics is best described by invoking {\em fragments} (such as charge-transfer excitations), we believe that physically-meaningful approximations for $v_p(\br t)$ will be simpler to construct than approximations for the highly non-local exchange-correlation potential and kernel of TDDFT. Work along these lines, as well as on the linear-response formalism, is ongoing.

{\em Acknowledgements:} We acknowledge support from the National Science Foundation CAREER program under grant No.CHE-1149968, and from the Office of Basic Energy Scieneces, U.S. Department of Energy, under grant No. DE-FG02-10ER16196.

\bibliography{refs_ptddft}

\begin{thebibliography}{29}%
\makeatletter
\providecommand \@ifxundefined [1]{%
 \@ifx{#1\undefined}
}%
\providecommand \@ifnum [1]{%
 \ifnum #1\expandafter \@firstoftwo
 \else \expandafter \@secondoftwo
 \fi
}%
\providecommand \@ifx [1]{%
 \ifx #1\expandafter \@firstoftwo
 \else \expandafter \@secondoftwo
 \fi
}%
\providecommand \natexlab [1]{#1}%
\providecommand \enquote  [1]{``#1''}%
\providecommand \bibnamefont  [1]{#1}%
\providecommand \bibfnamefont [1]{#1}%
\providecommand \citenamefont [1]{#1}%
\providecommand \href@noop [0]{\@secondoftwo}%
\providecommand \href [0]{\begingroup \@sanitize@url \@href}%
\providecommand \@href[1]{\@@startlink{#1}\@@href}%
\providecommand \@@href[1]{\endgroup#1\@@endlink}%
\providecommand \@sanitize@url [0]{\catcode `\\12\catcode `\$12\catcode
  `\&12\catcode `\#12\catcode `\^12\catcode `\_12\catcode `\%12\relax}%
\providecommand \@@startlink[1]{}%
\providecommand \@@endlink[0]{}%
\providecommand \url  [0]{\begingroup\@sanitize@url \@url }%
\providecommand \@url [1]{\endgroup\@href {#1}{\urlprefix }}%
\providecommand \urlprefix  [0]{URL }%
\providecommand \Eprint [0]{\href }%
\providecommand \doibase [0]{http://dx.doi.org/}%
\providecommand \selectlanguage [0]{\@gobble}%
\providecommand \bibinfo  [0]{\@secondoftwo}%
\providecommand \bibfield  [0]{\@secondoftwo}%
\providecommand \translation [1]{[#1]}%
\providecommand \BibitemOpen [0]{}%
\providecommand \bibitemStop [0]{}%
\providecommand \bibitemNoStop [0]{.\EOS\space}%
\providecommand \EOS [0]{\spacefactor3000\relax}%
\providecommand \BibitemShut  [1]{\csname bibitem#1\endcsname}%
\let\auto@bib@innerbib\@empty
\bibitem [{\citenamefont {Runge}\ and\ \citenamefont {Gross}(1984)}]{RG84}%
  \BibitemOpen
  \bibfield  {author} {\bibinfo {author} {\bibfnamefont {E.}~\bibnamefont
  {Runge}}\ and\ \bibinfo {author} {\bibfnamefont {E.~K.~U.}\ \bibnamefont
  {Gross}},\ }\href {\doibase 10.1103/PhysRevLett.52.997} {\bibfield  {journal}
  {\bibinfo  {journal} {Phys. Rev. Lett.}\ }\textbf {\bibinfo {volume} {52}},\
  \bibinfo {pages} {997} (\bibinfo {year} {1984})}\BibitemShut {NoStop}%
\bibitem [{\citenamefont {Ullrich}(2012)}]{Ullrich-book}%
  \BibitemOpen
  \bibfield  {author} {\bibinfo {author} {\bibfnamefont {C.~A.}\ \bibnamefont
  {Ullrich}},\ }\href {http://www.worldcat.org/isbn/0199563020} {\emph
  {\bibinfo {title} {Time-dependent Density-functional Theory}}}\ (\bibinfo
  {publisher} {Oxford University Press, USA},\ \bibinfo {year}
  {2012})\BibitemShut {NoStop}%
\bibitem [{\citenamefont {Burke}\ \emph {et~al.}(2005)\citenamefont {Burke},
  \citenamefont {Werschnik},\ and\ \citenamefont {Gross}}]{BWG05}%
  \BibitemOpen
  \bibfield  {author} {\bibinfo {author} {\bibfnamefont {K.}~\bibnamefont
  {Burke}}, \bibinfo {author} {\bibfnamefont {J.}~\bibnamefont {Werschnik}}, \
  and\ \bibinfo {author} {\bibfnamefont {E.~K.~U.}\ \bibnamefont {Gross}},\
  }\href {\doibase 10.1063/1.1904586} {\bibfield  {journal} {\bibinfo
  {journal} {J. Chem. Phys.}\ }\textbf {\bibinfo {volume} {123}},\ \bibinfo
  {pages} {062206} (\bibinfo {year} {2005})}\BibitemShut {NoStop}%
\bibitem [{\citenamefont {Yang}(1991)}]{Y91}%
  \BibitemOpen
  \bibfield  {author} {\bibinfo {author} {\bibfnamefont {W.}~\bibnamefont
  {Yang}},\ }\href {\doibase 10.1103/PhysRevLett.66.1438} {\bibfield  {journal}
  {\bibinfo  {journal} {Phys. Rev. Lett.}\ }\textbf {\bibinfo {volume} {66}},\
  \bibinfo {pages} {1438} (\bibinfo {year} {1991})}\BibitemShut {NoStop}%
\bibitem [{\citenamefont {Severo Pereira~Gomes}\ and\ \citenamefont
  {Jacob}(2012)}]{PJ12}%
  \BibitemOpen
  \bibfield  {author} {\bibinfo {author} {\bibfnamefont {A.}~\bibnamefont
  {Severo Pereira~Gomes}}\ and\ \bibinfo {author} {\bibfnamefont {C.~R.}\
  \bibnamefont {Jacob}},\ }\href {\doibase 10.1039/c2pc90007f} {\bibfield
  {journal} {\bibinfo  {journal} {Annu. Rep. Prog. Chem., Sect. C: Phys.
  Chem.}\ }\textbf {\bibinfo {volume} {108}},\ \bibinfo {pages} {222} (\bibinfo
  {year} {2012})}\BibitemShut {NoStop}%
\bibitem [{\citenamefont {Casida}\ and\ \citenamefont
  {Weso{\l}owski}(2004)}]{CW04}%
  \BibitemOpen
  \bibfield  {author} {\bibinfo {author} {\bibfnamefont {M.~E.}\ \bibnamefont
  {Casida}}\ and\ \bibinfo {author} {\bibfnamefont {T.~A.}\ \bibnamefont
  {Weso{\l}owski}},\ }\href@noop {} {\bibfield  {journal} {\bibinfo  {journal}
  {Int. J. Quantum Chem.}\ }\textbf {\bibinfo {volume} {96}},\ \bibinfo {pages}
  {577} (\bibinfo {year} {2004})}\BibitemShut {NoStop}%
\bibitem [{\citenamefont {Fradelos}\ \emph {et~al.}(2011)\citenamefont
  {Fradelos}, \citenamefont {Lutz}, \citenamefont {Weso{\l}owski},
  \citenamefont {Piecuch},\ and\ \citenamefont {W{\l}och}}]{FLWP11}%
  \BibitemOpen
  \bibfield  {author} {\bibinfo {author} {\bibfnamefont {G.}~\bibnamefont
  {Fradelos}}, \bibinfo {author} {\bibfnamefont {J.~J.}\ \bibnamefont {Lutz}},
  \bibinfo {author} {\bibfnamefont {T.~A.}\ \bibnamefont {Weso{\l}owski}},
  \bibinfo {author} {\bibfnamefont {P.}~\bibnamefont {Piecuch}}, \ and\
  \bibinfo {author} {\bibfnamefont {M.}~\bibnamefont {W{\l}och}},\ }\href
  {\doibase 10.1021/ct200101x} {\bibfield  {journal} {\bibinfo  {journal} {J.
  Chem. Theory Comput.}\ }\textbf {\bibinfo {volume} {7}},\ \bibinfo {pages}
  {1647} (\bibinfo {year} {2011})}\BibitemShut {NoStop}%
\bibitem [{\citenamefont {Neugebauer}\ \emph {et~al.}(2005)\citenamefont
  {Neugebauer}, \citenamefont {Louwerse}, \citenamefont {Baerends},\ and\
  \citenamefont {Weso{\l}owski}}]{NLBW05}%
  \BibitemOpen
  \bibfield  {author} {\bibinfo {author} {\bibfnamefont {J.}~\bibnamefont
  {Neugebauer}}, \bibinfo {author} {\bibfnamefont {M.~J.}\ \bibnamefont
  {Louwerse}}, \bibinfo {author} {\bibfnamefont {E.~J.}\ \bibnamefont
  {Baerends}}, \ and\ \bibinfo {author} {\bibfnamefont {T.~A.}\ \bibnamefont
  {Weso{\l}owski}},\ }\href {\doibase 10.1063/1.1858411} {\bibfield  {journal}
  {\bibinfo  {journal} {J. Chem. Phys.}\ }\textbf {\bibinfo {volume} {122}},\
  \bibinfo {pages} {094115} (\bibinfo {year} {2005})}\BibitemShut {NoStop}%
\bibitem [{\citenamefont {Iozzi}\ \emph {et~al.}(2004)\citenamefont {Iozzi},
  \citenamefont {Mennucci}, \citenamefont {Tomasi},\ and\ \citenamefont
  {Cammi}}]{IMTC04}%
  \BibitemOpen
  \bibfield  {author} {\bibinfo {author} {\bibfnamefont {M.~F.}\ \bibnamefont
  {Iozzi}}, \bibinfo {author} {\bibfnamefont {B.}~\bibnamefont {Mennucci}},
  \bibinfo {author} {\bibfnamefont {J.}~\bibnamefont {Tomasi}}, \ and\ \bibinfo
  {author} {\bibfnamefont {R.}~\bibnamefont {Cammi}},\ }\href {\doibase
  10.1063/1.1669389} {\bibfield  {journal} {\bibinfo  {journal} {J.Chem.
  Phys.}\ }\textbf {\bibinfo {volume} {120}},\ \bibinfo {pages} {7029}
  (\bibinfo {year} {2004})}\BibitemShut {NoStop}%
\bibitem [{\citenamefont {Hsu}\ \emph {et~al.}(2001)\citenamefont {Hsu},
  \citenamefont {Fleming}, \citenamefont {Head-Gordon},\ and\ \citenamefont
  {Head-Gordon}}]{HFG01}%
  \BibitemOpen
  \bibfield  {author} {\bibinfo {author} {\bibfnamefont {C.~P.}\ \bibnamefont
  {Hsu}}, \bibinfo {author} {\bibfnamefont {G.~R.}\ \bibnamefont {Fleming}},
  \bibinfo {author} {\bibfnamefont {M.}~\bibnamefont {Head-Gordon}}, \ and\
  \bibinfo {author} {\bibfnamefont {T.}~\bibnamefont {Head-Gordon}},\ }\href
  {\doibase 10.1063/1.1338531} {\bibfield  {journal} {\bibinfo  {journal} {J.
  Chem. Phys.}\ }\textbf {\bibinfo {volume} {114}},\ \bibinfo {pages} {3065}
  (\bibinfo {year} {2001})}\BibitemShut {NoStop}%
\bibitem [{\citenamefont {Neugebauer}(2010)}]{N10}%
  \BibitemOpen
  \bibfield  {author} {\bibinfo {author} {\bibfnamefont {J.}~\bibnamefont
  {Neugebauer}},\ }\href {\doibase 10.1016/j.physrep.2009.12.001} {\bibfield
  {journal} {\bibinfo  {journal} {Phys. Rep.}\ }\textbf {\bibinfo {volume}
  {489}},\ \bibinfo {pages} {1} (\bibinfo {year} {2010})}\BibitemShut {NoStop}%
\bibitem [{\citenamefont {Neugebauer}(2007)}]{N07}%
  \BibitemOpen
  \bibfield  {author} {\bibinfo {author} {\bibfnamefont {J.}~\bibnamefont
  {Neugebauer}},\ }\href {\doibase 10.1063/1.2713754} {\bibfield  {journal}
  {\bibinfo  {journal} {J. Chem. Phys.}\ }\textbf {\bibinfo {volume} {126}},\
  \bibinfo {pages} {134116} (\bibinfo {year} {2007})}\BibitemShut {NoStop}%
\bibitem [{\citenamefont {Yuen-Zhou}\ \emph {et~al.}(2010)\citenamefont
  {Yuen-Zhou}, \citenamefont {Tempel}, \citenamefont {Rodr\'iguez-Rosario},\
  and\ \citenamefont {Aspuru-Guzik}}]{ZTRG10}%
  \BibitemOpen
  \bibfield  {author} {\bibinfo {author} {\bibfnamefont {J.}~\bibnamefont
  {Yuen-Zhou}}, \bibinfo {author} {\bibfnamefont {D.~G.}\ \bibnamefont
  {Tempel}}, \bibinfo {author} {\bibfnamefont {C.~A.}\ \bibnamefont
  {Rodr\'iguez-Rosario}}, \ and\ \bibinfo {author} {\bibfnamefont
  {A.}~\bibnamefont {Aspuru-Guzik}},\ }\href {\doibase
  10.1103/PhysRevLett.104.043001} {\bibfield  {journal} {\bibinfo  {journal}
  {Phys. Rev. Lett.}\ }\textbf {\bibinfo {volume} {104}},\ \bibinfo {pages}
  {043001} (\bibinfo {year} {2010})}\BibitemShut {NoStop}%
\bibitem [{\citenamefont {Elliott}\ \emph {et~al.}(2010)\citenamefont
  {Elliott}, \citenamefont {Burke}, \citenamefont {Cohen},\ and\ \citenamefont
  {Wasserman}}]{EBCW10}%
  \BibitemOpen
  \bibfield  {author} {\bibinfo {author} {\bibfnamefont {P.}~\bibnamefont
  {Elliott}}, \bibinfo {author} {\bibfnamefont {K.}~\bibnamefont {Burke}},
  \bibinfo {author} {\bibfnamefont {M.~H.}\ \bibnamefont {Cohen}}, \ and\
  \bibinfo {author} {\bibfnamefont {A.}~\bibnamefont {Wasserman}},\ }\href
  {\doibase 10.1103/PhysRevA.82.024501} {\bibfield  {journal} {\bibinfo
  {journal} {Phys. Rev. A}\ }\textbf {\bibinfo {volume} {82}},\ \bibinfo
  {pages} {024501} (\bibinfo {year} {2010})}\BibitemShut {NoStop}%
\bibitem [{\citenamefont {Cohen}\ and\ \citenamefont {Wasserman}(2007)}]{CW07}%
  \BibitemOpen
  \bibfield  {author} {\bibinfo {author} {\bibfnamefont {M.~H.}\ \bibnamefont
  {Cohen}}\ and\ \bibinfo {author} {\bibfnamefont {A.}~\bibnamefont
  {Wasserman}},\ }\href {\doibase 10.1021/jp066449h} {\bibfield  {journal}
  {\bibinfo  {journal} {J. Phys. Chem. A}\ }\textbf {\bibinfo {volume} {111}},\
  \bibinfo {pages} {2229} (\bibinfo {year} {2007})}\BibitemShut {NoStop}%
\bibitem [{\citenamefont {Huang}\ \emph {et~al.}(2011)\citenamefont {Huang},
  \citenamefont {Pavone},\ and\ \citenamefont {Carter}}]{HPC11}%
  \BibitemOpen
  \bibfield  {author} {\bibinfo {author} {\bibfnamefont {C.}~\bibnamefont
  {Huang}}, \bibinfo {author} {\bibfnamefont {M.}~\bibnamefont {Pavone}}, \
  and\ \bibinfo {author} {\bibfnamefont {E.~A.}\ \bibnamefont {Carter}},\
  }\href {\doibase 10.1063/1.3577516} {\bibfield  {journal} {\bibinfo
  {journal} {J. Chem. Phys.}\ }\textbf {\bibinfo {volume} {134}},\ \bibinfo
  {pages} {154110} (\bibinfo {year} {2011})}\BibitemShut {NoStop}%
\bibitem [{\citenamefont {Cohen}\ and\ \citenamefont {Wasserman}(2006)}]{CW06}%
  \BibitemOpen
  \bibfield  {author} {\bibinfo {author} {\bibfnamefont {M.~H.}\ \bibnamefont
  {Cohen}}\ and\ \bibinfo {author} {\bibfnamefont {A.}~\bibnamefont
  {Wasserman}},\ }\href {\doibase 10.1007/s10955-006-9031-0} {\bibfield
  {journal} {\bibinfo  {journal} {J. Stat. Phys.}\ }\textbf {\bibinfo {volume}
  {125}},\ \bibinfo {pages} {1121} (\bibinfo {year} {2006})}\BibitemShut
  {NoStop}%
\bibitem [{\citenamefont {Pavanello}(2013)}]{P13}%
  \BibitemOpen
  \bibfield  {author} {\bibinfo {author} {\bibfnamefont {M.}~\bibnamefont
  {Pavanello}},\ }\href {http://arxiv.org/abs/1212.4121} {\  (\bibinfo {year}
  {2013})},\ \Eprint {http://arxiv.org/abs/1212.4121} {arXiv:1212.4121}
  \BibitemShut {NoStop}%
\bibitem [{\citenamefont {Rajagopal}(1996)}]{R96}%
  \BibitemOpen
  \bibfield  {author} {\bibinfo {author} {\bibfnamefont {A.~K.}\ \bibnamefont
  {Rajagopal}},\ }\href {\doibase 10.1103/PhysRevA.54.3916} {\bibfield
  {journal} {\bibinfo  {journal} {Phys. Rev. A}\ }\textbf {\bibinfo {volume}
  {54}},\ \bibinfo {pages} {3916} (\bibinfo {year} {1996})}\BibitemShut
  {NoStop}%
\bibitem [{\citenamefont {Vignale}(2008)}]{V08}%
  \BibitemOpen
  \bibfield  {author} {\bibinfo {author} {\bibfnamefont {G.}~\bibnamefont
  {Vignale}},\ }\href {\doibase 10.1103/PhysRevA.77.062511} {\bibfield
  {journal} {\bibinfo  {journal} {Phys. Rev. A}\ }\textbf {\bibinfo {volume}
  {77}},\ \bibinfo {pages} {062511} (\bibinfo {year} {2008})}\BibitemShut
  {NoStop}%
\bibitem [{\citenamefont {Li}\ and\ \citenamefont {Tong}(1985)}]{LT85}%
  \BibitemOpen
  \bibfield  {author} {\bibinfo {author} {\bibfnamefont {T.~C.}\ \bibnamefont
  {Li}}\ and\ \bibinfo {author} {\bibfnamefont {P.~Q.}\ \bibnamefont {Tong}},\
  }\href {\doibase 10.1103/PhysRevA.31.1950} {\bibfield  {journal} {\bibinfo
  {journal} {Phys. Rev. A}\ }\textbf {\bibinfo {volume} {31}},\ \bibinfo
  {pages} {1950} (\bibinfo {year} {1985})}\BibitemShut {NoStop}%
\bibitem [{\citenamefont {Gross}\ and\ \citenamefont {Maitra}(2012)}]{GM12}%
  \BibitemOpen
  \bibfield  {author} {\bibinfo {author} {\bibfnamefont {E.~K.~U.}\
  \bibnamefont {Gross}}\ and\ \bibinfo {author} {\bibfnamefont {N.~T.}\
  \bibnamefont {Maitra}}\ }(\bibinfo  {publisher} {Springer Berlin /
  Heidelberg},\ \bibinfo {address} {Berlin, Heidelberg},\ \bibinfo {year}
  {2012})\ Chap.~\bibinfo {chapter} {4}, pp.\ \bibinfo {pages}
  {53--99}\BibitemShut {NoStop}%
\bibitem [{\citenamefont {van Leeuwen}(1999)}]{L99}%
  \BibitemOpen
  \bibfield  {author} {\bibinfo {author} {\bibfnamefont {R.}~\bibnamefont {van
  Leeuwen}},\ }\href {\doibase 10.1103/physrevlett.82.3863} {\bibfield
  {journal} {\bibinfo  {journal} {Phys. Rev. Lett.}\ }\textbf {\bibinfo
  {volume} {82}},\ \bibinfo {pages} {3863} (\bibinfo {year}
  {1999})}\BibitemShut {NoStop}%
\bibitem [{\citenamefont {Mosquera}\ and\ \citenamefont
  {Wasserman}(2013)}]{MW12}%
  \BibitemOpen
  \bibfield  {author} {\bibinfo {author} {\bibfnamefont {M.~A.}\ \bibnamefont
  {Mosquera}}\ and\ \bibinfo {author} {\bibfnamefont {A.}~\bibnamefont
  {Wasserman}},\ }\href {\doibase 10.1080/00268976.2012.729096} {\bibfield
  {journal} {\bibinfo  {journal} {Mol. Phys.}\ }\textbf {\bibinfo {volume}
  {111}},\ \bibinfo {pages} {505} (\bibinfo {year} {2013})}\BibitemShut
  {NoStop}%
\bibitem [{\citenamefont {Lein}\ and\ \citenamefont {K\"{u}mmel}(2005)}]{LK05}%
  \BibitemOpen
  \bibfield  {author} {\bibinfo {author} {\bibfnamefont {M.}~\bibnamefont
  {Lein}}\ and\ \bibinfo {author} {\bibfnamefont {S.}~\bibnamefont
  {K\"{u}mmel}},\ }\href {\doibase 10.1103/physrevlett.94.143003} {\bibfield
  {journal} {\bibinfo  {journal} {Phys. Rev. Lett.}\ }\textbf {\bibinfo
  {volume} {94}},\ \bibinfo {pages} {143003} (\bibinfo {year}
  {2005})}\BibitemShut {NoStop}%
\bibitem [{\citenamefont {Nielsen}\ \emph {et~al.}(2013)\citenamefont
  {Nielsen}, \citenamefont {Ruggenthaler},\ and\ \citenamefont {van
  Leeuwen}}]{NRL12}%
  \BibitemOpen
  \bibfield  {author} {\bibinfo {author} {\bibfnamefont {S.~E.~B.}\
  \bibnamefont {Nielsen}}, \bibinfo {author} {\bibfnamefont {M.}~\bibnamefont
  {Ruggenthaler}}, \ and\ \bibinfo {author} {\bibfnamefont {R.}~\bibnamefont
  {van Leeuwen}},\ }\href {\doibase 10.1209/0295-5075/101/33001} {\bibfield
  {journal} {\bibinfo  {journal} {Europhys. Lett.}\ ,\ \bibinfo {pages}
  {33001}} (\bibinfo {year} {2013})}\BibitemShut {NoStop}%
\bibitem [{\citenamefont {Morales}\ and\ \citenamefont
  {Nocedal}(2011)}]{morales_remark_2011}%
  \BibitemOpen
  \bibfield  {author} {\bibinfo {author} {\bibfnamefont {J.~L.}\ \bibnamefont
  {Morales}}\ and\ \bibinfo {author} {\bibfnamefont {J.}~\bibnamefont
  {Nocedal}},\ }\href {\doibase 10.1145/2049662.2049669} {\bibfield  {journal}
  {\bibinfo  {journal} {{ACM} Trans. Math. Softw.}\ }\textbf {\bibinfo {volume}
  {38}},\ \bibinfo {pages} {7:1} (\bibinfo {year} {2011})}\BibitemShut
  {NoStop}%
\bibitem [{\citenamefont {Peirs}\ \emph {et~al.}(2003)\citenamefont {Peirs},
  \citenamefont {{Van Neck}},\ and\ \citenamefont
  {Waroquier}}]{peirs_algorithm_2003}%
  \BibitemOpen
  \bibfield  {author} {\bibinfo {author} {\bibfnamefont {K.}~\bibnamefont
  {Peirs}}, \bibinfo {author} {\bibfnamefont {D.}~\bibnamefont {{Van Neck}}}, \
  and\ \bibinfo {author} {\bibfnamefont {M.}~\bibnamefont {Waroquier}},\
  }\href@noop {} {\bibfield  {journal} {\bibinfo  {journal} {Phys. Rev. A}\
  }\textbf {\bibinfo {volume} {67}},\ \bibinfo {pages} {012505} (\bibinfo
  {year} {2003})}\BibitemShut {NoStop}%
\bibitem [{\citenamefont {Maitra}\ and\ \citenamefont {Burke}(2001)}]{MB01}%
  \BibitemOpen
  \bibfield  {author} {\bibinfo {author} {\bibfnamefont {N.~T.}\ \bibnamefont
  {Maitra}}\ and\ \bibinfo {author} {\bibfnamefont {K.}~\bibnamefont {Burke}},\
  }\href {\doibase 10.1103/PhysRevA.63.042501} {\bibfield  {journal} {\bibinfo
  {journal} {Phys. Rev. A}\ }\textbf {\bibinfo {volume} {63}},\ \bibinfo
  {pages} {042501} (\bibinfo {year} {2001})}\BibitemShut {NoStop}%
\end{thebibliography}%

\end{document}
%